\documentclass[10pt]{article}

\usepackage{amssymb}
\usepackage{amsfonts}
\usepackage{amsmath}
\usepackage{amscd}
\usepackage[all,cmtip]{xy}
\usepackage{amsthm}
\usepackage{setspace}
\usepackage{enumerate}
\usepackage{graphicx}
\usepackage{tikz}
\usepackage{verbatim}
\usepackage{geometry}

\setstretch{1.08}

\geometry{total={8.5in,11in},left=0.7in, right = 0.7in, top=1.0in, bottom=1.0in}

\usepackage{todonotes}

\theoremstyle{plain}

\newtheorem{theorem}{Theorem}
\newtheorem*{rigiditylemma}{Rigidity Lemma}
\newtheorem*{thm*}{Theorem}

\newtheorem*{lem*}{Lemma}

\theoremstyle{definition}

\newtheorem*{example*}{Example}

\newtheorem*{remark}{Remark}

\newcommand{\CC}{{\mathbb C}}

\newcommand{\ZZ}{{\mathbb Z}}

\usepackage{hyperref}

\title{Rigidity of Kac-Schwarz operators}
\author{Martin T. Luu}

\date{}

\begin{document}

\newcommand{\Addresses}{{
\bigskip
\footnotesize

M. Luu, \textsc{Department of Mathematics, University of California, Davis} \par \nopagebreak \textit{E-mail address:} \texttt{mluu@math.ucdavis.edu}

}}

\maketitle

\begin{abstract}
In his work on the mathematical formulation of 2d quantum gravity Schwarz established a rigidity result for Kac-Schwarz operators for the $n$-KdV hierarchies. Later on, Adler and van Moerbeke as well as Fastr\'{e} obtained different proofs of this result. We give yet another proof of the rigidity, one that in fact holds for all Drinfeld-Sokolov hierarchies.   
\end{abstract}

\section{Introduction}
\label{intro}

In \cite{WIT} Witten conjectures the equivalance of various descriptions of 2d quantum gravity. The conjecture is proven by Kontsevich in \cite{KON}. A central part of these works is to show that the partition function $Z$ of topological gravity corresponds to a tau function of the KdV integrable hierarchy. The function $Z$ is singled out within the KdV phase space by the string equation. Concretely, there is a unique formal power series $\tau$ in $\CC[\![t_{1},t_{3},\cdots]\!]$, the Witten-Kontsevich tau function, such that
\begin{eqnarray}
\label{string-equation}
\left ( \sum_{\substack{i >0 \\ \\ i \textrm{ odd }}} \left ( \frac{i+2}{2} \cdot t_{i+2}-\delta_{i,1} \right)\partial_{t_{i}} +\frac{1}{4}\cdot t_{1}^{2} \right )\tau = 0
\end{eqnarray}
and $\tau^{2}$ is equal to $Z$.  

Kac and Schwarz give in \cite{KS} a geometric description of this $\tau$. The totality of all KdV tau functions can geometrically be described by the Sato Grassmannian. In \cite{KS} it is shown how to single out the point $V$ in this Grassmannian corresponding to the Witten-Kontsevich tau function. In terms of an indeterminate $z$, the points of the Grassmannian correspond to certain complex subspaces of $\CC(\!(1/z)\!)$. The point $V$ corresponding to Equation (\ref{string-equation}) can be described via the stabilization conditions 
\begin{eqnarray}
\label{KdV-equation}
z^{2}V\subseteq V\\
\label{Kac-Schwarz-equation}
\left ( \partial_{z^{2}} - \frac{1}{4z^{2}}+z \right )V \subseteq V
\end{eqnarray}
The stabilization condition in Equation (\ref{KdV-equation}) simply means that $V$ does not correspond to a general KP tau function, but rather happens to be a KdV tau function. On the other hand, Equation (\ref{Kac-Schwarz-equation}) is very restrictive: It is known that there is exactly one point $V$ satisfying it in addition to Equation (\ref{KdV-equation}). The differential operator stabilizing $V$ in Equation (\ref{Kac-Schwarz-equation}) is called a Kac-Schwarz operator. 

Witten's conjecture and Kontsevich's proof thereof create an intriguing connection between quantum field theory and integrable systems. The KdV hierarchy is but one member of the infinite collection of Drinfeld-Sokolov hierarchies: Fixing a complex finite-dimensional simple Lie algebra $\mathfrak g$ and corresponding (possibly twisted) affine Lie algebra $\mathfrak g^{(k)}$ ($k \in \{1,2,3\}$), Drinfeld and Sokolov construct in \cite{DS} a Lie theoretic generalization for $\mathfrak g^{(k)}$ of the KdV integrable system. The latter corresponds to the special case of $\mathfrak s\mathfrak l_{2}^{(1)}$. Generalizations of Witten's conjecture for these more general hierarchies have been intensely studied. See the work of Fan, Jarvis, Ruan \cite{FJR} for the case where $\mathfrak g$ is of ADE type and the work of Liu, Ruan, Zhang \cite{LRZ} for the BCFG types.

In the present work we are interested in properties of Kac-Schwarz operators in the generality of arbitrary affine Lie algebras. To explain our result, we first come back to the KdV Kac-Schwarz operator of Equation (\ref{Kac-Schwarz-equation}). One can let the point $V$ flow within the phase space along the KdV flows and this corresponds to adding to the Kac-Schwarz operator an arbitrary element of $\CC(\!(z^{-1})\!)$ with no powers of $z$ less than $-1$ occurring. Furthermore, there is a notion of gauge equivalent points of the Grassmannian, and it turns out that moving $V$ within its gauge equivalence class has the effect of adding to the Kac-Schwarz operator an element of $\CC(\!(z^{-1})\!)$ with no powers of $z$ higher than $-3$ occurring. Therefore the term $-1/4z^{2}$ in the Kac-Schwarz operator of Equation (\ref{Kac-Schwarz-equation}) plays a very special role: It is the unique part that can not be adjusted via the flows of the integrable system or via gauge transformations. In fact, more is true: Schwarz shows in \cite{SCH} that the factor $-1/4$ of the $z^{-2}$ term is the unique value at which the stabilization conditions of Equation (\ref{KdV-equation}) and Equation (\ref{Kac-Schwarz-equation}) admit a solution $V$. This can be viewed as a rigidity result for Kac-Schwarz operators. By now, there are at least two additional proofs of this result, see the work of Adler and van Moerbeke \cite{AVM}, and Fastr\'{e} \cite{FAS}. 

The starting point of our considerations is the observation by Kac and Schwarz \cite{KS} that when the KdV hierarchy is viewed as the $\mathfrak s \mathfrak l_{2}^{(1)}$ Drinfeld-Sokolov hierarchy then the special factor $-1/4$ attains clear Lie theoretic meaning. From this perspective, the rigidity of the Kac-Schwarz operator is an interesting relation between Lie theory and integrable systems and quantum field theory. We show that this continues to hold beyond the KdV case: In Theorem \ref{main-theorem} we prove the analogue of Schwarz's rigidity result in the generality of all Drinfeld-Sokolov hierarchies, untwisted as well as twisted. 

There are some important remarks concerning the relation between the original scalar formulation of the rigidity and the corresponding matrix version in the Drinfeld-Sokolov context. To give details we first loosely describe Theorem \ref{main-theorem} in the case of untwisted affine Lie algebras (we refer to the main text for all precise definitions). Let $\mathfrak g$ be for simplicity a simple finite-dimensional complex Lie algebra of classical type, with associated untwisted affine Lie algebra $\mathfrak g^{(1)}$, and let $n$ be the dimension of the defining vector representation $V$ of $\mathfrak g$. In terms of a choice of positive Chevalley generators $e_{i}$ of $\mathfrak g$ let $\rho^{\vee}$ be the element of $\mathfrak g$ satisfying $[\rho^{\vee},e_{i}]=1$ for all $i$. Let $\Lambda_{1,z}$ be a cyclic element of the affine Lie algebra $\mathfrak g^{(1)}$ associated to $\mathfrak g$. The big cell $Gr_{n}(\mathfrak g^{(1)})$ of the vector Sato Grassmannian consists of certain complex subspaces of $V(\!(1/z)\!)$. The essence of our main rigidity result, Theorem \ref{main-theorem}, is captured by the following special case. Let $h$ be the Coxeter number of $\mathfrak g$. If $V$ is a point of $Gr_{n}(\mathfrak g^{(1)})$ that in addition satisfies
\begin{eqnarray}
\label{deformation-equation}
\left ( \partial_{z} + \frac{\rho^{\vee} + *\cdot \textrm{id}_{h}}{hz} +\Lambda_{1,z} \right ) V \subseteq V
\end{eqnarray}
for some scalar $*$, then in fact $*=0$. This can be viewed as a rigidity with respect to deforming the Weyl vector $\rho^{\vee}$. As formulated above, the scalar deformation that is ruled out by Theorem \ref{main-theorem} might seem unnaturally restrictive. Let us make three remarks about this point.   
\begin{enumerate}
\item
If $\mathfrak g = \mathfrak s \mathfrak l_{h}$ then the Drinfeld-Sokolov hierarchy has a formulation in terms of a scalar version of the Grassmannian. In this latter description the rigidity of the  $*$-deformation of $\rho^{\vee}$ corresponds precisely to the original rigidity result of Schwarz: If a certain operator
$$\partial_{z^{h}} + \sum_{i \ll \infty} a_{i}z^{i}$$
stabilizes a point of the scalar Sato Grassmannian, then the value of the coefficient $a_{-h}$ is uniquely determined.
\item
One might ask if there is rigidity of $\rho^{\vee}$ with respect to deformations inside the corresponding Cartan subalgebra $\mathfrak h$ of $\mathfrak g$. This type of rigidity does not hold, as we will show: For every element $H$ in $\mathfrak h$ there is in fact a point of the Grassmannian stabilized by
$$\partial_{z} + \frac{H}{z} + \Lambda_{1,z}$$  
\item
The special role played by $\rho^{\vee}$ in this context comes from the relation between stabilization conditions like in Equation (\ref{deformation-equation}) and Virasoro constraints satisfied by the tau function associated to $V$. Within the original type A context this is described in the foundational work of Kac and Schwarz \cite{KS}. For general Drinfeld-Sokolov hierarchies this is treated by Cafasso and Wu in \cite{CW2}. While our Theorem \ref{main-theorem} could be generalized to include rigidity results for more general Lie algebra elements, we restrict to the $\rho^{\vee}$ case since this is the case directly relevant for Witten-Kontsevich points of Drinfeld-Sokolov hierarchies. We refer to \cite{CW2} for more details on the string equation within this general context.
\end{enumerate}

\section{Rigidity of Kac-Schwarz operators}
In Section \ref{scalar-section} we describe the rigidity result for Kac-Schwarz operators in the classical scalar setting associated to the $h$-reduction of the KP hierarchy. In Section \ref{Drinfeld-Sokolov-section} we first recall two reformulations, due to Kac and Schwarz, of this rigidity in terms of the affine Lie algebra $\mathfrak s \mathfrak l_{h}^{(1)}$. This motivates the formulation, proved in Theorem \ref{main-theorem}, of a generalization of the rigidity  to arbitrary Drinfeld-Sokolov hierarchies.

\subsection{The scalar setting}
\label{scalar-section}
Fix an indeterminate $z$ and let $Gr^{\textrm{sc}}$ be the big cell of the index zero part of the scalar Sato Grassmannian. As a set, it simply consists of complex subspaces $V$ of $\CC(\!(1/z)\!)$ whose projection to $\CC[z]$ is an isomorphism. Each point in this Grassmannian corresponds to a Lax operator of the KP hierarchy. Fix a positive integer $h$ and suppose $V$ is a point of $Gr^{\textrm{sc}}$ satisfying the stabilization conditions
$$z^{h} V \subseteq V$$
$$\left (\frac{1}{hz^{h-1}}\cdot \partial_{z} + f(z)\right ) V \subseteq V$$
for some $f(z)$ in $\CC(\!(1/z)\!)$. The first condition implies that $V$ corresponds to the $h$-reduced part of the KP hierarchy and the second condition generalizes Equation (\ref{Kac-Schwarz-equation}) given in the introduction. 

In fact, only a very small part of $f$ cannot be eliminated via gauge transformations or via KP flows: Decompose
$$f = f_{<-h}+f_{-h}+f_{>-h}$$
with
$$f_{<-h}=\sum_{i<-h} c_{i}z^{i} \;\; , \;\; f_{>-h}=\sum_{i>-h} c_{i}z^{i} \;\; , \;\; f_{-h}=c_{-h}\cdot z^{-h}$$
Consider a gauge transformation of the form $\gamma =1 + \sum_{i<0} d_i z^{i}$ where the $d_{i}$ are constants. It changes the Kac-Schwarz operator via
\begin{eqnarray*}
\frac{1}{hz^{h-1}}\cdot \partial_{z} + f(z) \; &\mapsto & \; \frac{1}{hz^{h-1}}\cdot \partial_{z} + f(z) + \frac{1}{hz^{h-1}}\gamma^{-1}\partial_{z}\gamma \\
&&\\
 = \frac{1}{hz^{h-1}}\cdot \partial_{z} + f(z) &+& \frac{1}{hz^{h-1}} \left ( -d_{-1}z^{-2}+  (d_{-1}^{2}-2d_{-2})z^{-3}+ \cdots \right )
\end{eqnarray*}
By recursion with respect to degree, one sees that for any choice of $f_{<-h}$ there is a $\gamma$ such that
$$\gamma^{-1}=hz^{h-1}( \partial_{z}\gamma)^{-1} \cdot f_{<-h}$$
One deduces that gauge transformations correspond to changing $f_{<-h}$. In particular, one can gauge fix $f_{<-h}$ to be $0$.

Let $t_{1},t_{2},\cdots$ denote the KP time variables. The KP flows on the Grassmannian are given by
$$V \mapsto \exp \left( \sum_{i>0} t_{i}z^{i} \right ) V$$
Hence the Kac-Schwarz operator changes as
$$\frac{1}{hz^{h-1}}\cdot \partial_{z}  +f(z) \mapsto \frac{1}{hz^{h-1}}\cdot \partial_{z} + f(z) - \sum_{i>0}\frac{i}{h} t_{i} z^{i-h} $$
Hence (if the KP time variables are viewed as actual complex parameters) the $f_{>-h}$ term of the Kac-Schwarz operator corresponds exactly to the KP time flows. To sum up, $f_{<-h}$ is a gauge term and $f_{>-h}$ is a flow term, so $f_{-h}=c_{-h}\cdot z^{-h}$ is the only remaining term. It turns out that this summand is rigid: Using a Wronskian argument Schwarz shows in \cite{SCH} that there is a unique value of $c_{-h}$ such that $V$ exists. More precisely:
\begin{rigiditylemma}[Version I]
Suppose $V$ is a point in the Sato Grassmannian $Gr^{\textrm{\emph{sc}}}$ such that 
$$z^{h}V\subseteq V \;\;\; ,\;\;\; \left (\frac{1}{hz^{h-1}}\cdot \partial_{z} + \sum_{i\le q}c_{i}z^{i} \right ) V \subseteq V$$
with $q\ge 1$ co-prime to $h$. Then
\begin{eqnarray}
\label{special-value}
c_{-h} = \frac{1-h}{2h}
\end{eqnarray}
\end{rigiditylemma}

After the original proof by Schwarz \cite{SCH}, different proofs of this result were given by Adler and van Moerbeke in \cite{AVM} and also by Fastr\'{e} in \cite{FAS}. In the next section we present another proof that in fact works for all Drinfeld-Sokolov hierarchies (untwisted as well as twisted). The result by Schwarz corresponds to the $\mathfrak s \mathfrak l_{h}^{(1)}$ Drinfeld-Sokolov hierarchy.

\subsection{The Drinfeld-Sokolov setting}
\label{Drinfeld-Sokolov-section}
In this section we generalize the Schwarz rigidity lemma to general Drinfeld-Sokolov hierarchies. The first step is to recall a more Lie theoretic formulation of the lemma. This is based on the important observation by Kac and Schwarz \cite{KS} that there is a clear Lie theoretic meaning to the special value of $c_{-h}$ given in Equation (\ref{special-value}).

To recall this, we first describe the (vector) Sato Grassmannian description of Drinfeld-Sokolov hierarchies. This is developed in detail by Cafasso and Wu in \cite{CW1}. Fix a simple finite-dimensional complex Lie algebra $\mathfrak g$ and consider the corresponding (possibly twisted) affine Lie algebra $\mathfrak g^{(k)}$ (hence $k \in  \{1,2,3\}$ and $k$ is the order of a diagram automorphism of $\mathfrak g$). Let $r$ be the rank of the simple Lie algebra $\mathfrak g$. Let $\textbf{s}=(s_{0},\cdots,s_{r})$ be a collection of $r+1$ non-negative integers, not all zero, and let
\begin{eqnarray}
\label{h-equation}
h_{\textbf{s}}=k\sum_{i=0}^{r} a_{i}s_{i}
\end{eqnarray}
where the $a_{i}$'s are the Kac labels of $\mathfrak g^{(k)}$. 
As discussed in detail in \cite{KAC}, for each choice of $\textbf{s}$ there is an associated concrete realization $\mathfrak g^{(k)}(\textbf{s})$ of $\mathfrak g^{(k)}$ and $\mathfrak g^{(k)}(\textbf{s})\cong \mathfrak g^{(k)}$ for every choice of $\textbf{s}$. For each $\textbf{s}$ one constructs a certain  automorphism $\mu_{\textbf{s}}$ of $\mathfrak g$ of order $h_{\textbf{s}}$. Let $\mathfrak g^{[i]}$ denote the subspace of $\mathfrak g$ on which $\mu_{\textbf{s}}$ acts by multiplication by $\exp(2\pi i \sqrt{-1}/h_{\textbf{s}})$. The type $\textbf{s}$ realization $\mathfrak g^{(k)}(\textbf{s})$ of $\mathfrak g^{(k)}$ is of the form
\begin{eqnarray}
\label{realization-equation}
\mathfrak g^{(k)}(\textbf{s})=\left ( \bigoplus_{i \in \ZZ} \mathfrak g^{[i]} \otimes z^{i}  \right ) \oplus \CC \cdot c \oplus \CC \cdot d
\end{eqnarray}
where $c$ is a central element and $d$ is a certain derivation.
The $\ZZ$-gradation of type $\textbf{s}$ assigns degree $0$ to $c$ and $d$ and degree $i$ to $g\otimes z^{i}$ for $g$ in $\mathfrak g$. In particular, the negative part of $\mathfrak g^{(k)}(\textbf{s})$ is given by
$$\mathfrak g^{(k)}(\textbf{s})_{-} = \bigoplus_{i <0} \mathfrak g^{[i]} \otimes z^{i} $$ 
The Sato Grassmannian approach to the $\mathfrak g^{(k)}$ Drinfeld-Sokolov hierarchy is to let elements in $\mathfrak g^{(k)}(\textbf{s})_{-}$ act on a $\CC(\!(1/z)\!)$-vector space, where $\textbf{s}$ corresponds to the homogeneous gradation, meaning
$$\textbf{s}=\textbf{s}_{\textrm{hom}}= (1,0,\cdots,0)$$

To define this action one first chooses a complex faithful representation of $\mathfrak g$ as in \cite{CW1}. Let $n$ denote its dimension. We identify $\mathfrak g$ with its image under the representation. 

\begin{remark}
\label{trace-remark}
It will be used in the proof of Theorem \ref{main-theorem} that the representation is chosen so that each element of $\mathfrak g$ corresponds to a trace $0$ matrix.
\end{remark}

Let $\mathcal H^{+}_{n} := \CC[z]^{n}$ and let
$$Gr_{n}=\left \{ V \le \mathbb{C}(\!(1/z)\!)^{n} \; \Big | \; z V \subseteq V \textrm{ and } \textrm{proj} : V \rightarrow \mathcal H^{+}_{n} \; \textrm{ is an isomorphism } \right \}$$
Within $Gr_{n}$ sits the big cell of the vector Sato Grassmannian associated to $\mathfrak g^{(k)}$. It is defined as
\begin{eqnarray}
\label{homogeneous-Grassmannian-equation}
Gr_{n}(\mathfrak g^{(k)})=\left \{ V \in Gr_{n} \; \Big | \; V=\exp(A) \mathcal H^{+}_{n}  \;\; \textrm{ for some } A \textrm{ in } \mathfrak g^{(k)}(\textbf{s}_{\textrm{hom}})_{-} \right \}
\end{eqnarray}
where the $\mathfrak g$ action on $\CC(\!(1/z)\!)^{n}$ is via the fixed $n$-dimensional representation of $\mathfrak g$.

Cafasso and Wu define in \cite{CW1} for each $V$ in $Gr_{n}(\mathfrak g^{(k)})$ a tau function $\tau(\textbf{t})$ and it is shown that this Grassmannian tau function agrees up to an explicit non-zero constant with the tau function of a Lax operator of the $\mathfrak g^{(k)}$ Drinfeld-Sokolov hierarchy.  The construction of the latter type of tau function is given by Wu in \cite{WU} and $\exp(A)$ is the dressing operator of the relevant Lax operator.

To describe the Lie theoretic meaning of the special value of $c_{-h}$ in Equation (\ref{special-value}) we now specialize to the case $\mathfrak g = \mathfrak s \mathfrak l_{h}$. Furthermore, for notational reasons, we use the variable $\zeta$ instead of $z$ in the scalar version of the Schwarz rigidity lemma. The point $V$ of the scalar Sato Grassmannian $Gr^{\textrm{sc}}$ in the rigidity lemma then satisfies $\zeta^{h}V\subseteq V$. It is known that this implies that the corresponding tau function is a tau function of the $\mathfrak s \mathfrak l_{h}^{(1)}$ Drinfeld-Sokolov hierarchy. The Grassmannian formulation of this is given in terms of the blending map. To describe this, realize $\mathfrak s \mathfrak l_{h}$ in terms of traceless $h\times h$ matrices and let $e_{i}$ be the transpose of $(0, \cdots,0, 1,0, \cdots,  0)$ where the $1$ is in the $i$'th entry. Let $\zeta$ be an $h$'th root of $z$. The blending map is the isomorphism
$$\xi : \CC(\!(1/\zeta)\!) \longrightarrow \CC(\!(1/z)\!)^{h}$$
given by
$$\zeta^{h-i}f_{i}(\zeta^{h}) \mapsto e_{i}f_{i}(z) \;\; \; \textrm{ where } \;\;\; f_{i} \in \CC(\!(1/\zeta)\!)  \textrm{ and } 1\le i \le h$$
Under this map there is a correspondence of operators that in particular satisfies
$$c_{-h}\cdot \frac{1}{\zeta^{h}} \mapsto M\cdot \frac{1}{hz}:=\begin{pmatrix}
h-1+hc_{-h} &&&&\\
&\ddots &&&\\
&& h-i + hc_{-h}&& \\
&&&\ddots &\\
&&&&hc_{-h}
\end{pmatrix} \cdot \frac{1}{hz} $$

There are two aspects of the Lie theoretic meaning of the result by Schwarz that $c_{-h}=(1-h)/2h$. The first is that it singles out the unique value of $c_{-h}$ for which
$$M \in \mathfrak s \mathfrak l_{h}$$
The second aspect is that when $c_{-h}$ is chosen so that $M$ indeed has trace $0$, then $M$ is in fact a special element of this Lie algebra as we now recall: Let $e_{i,j}$ denote the $h\times h$ matrix with zeros everywhere except a $1$ at the $(i,j)$ entry. Consider the standard choice of Chevalley generators $e_{i,i+1},f_{i+1,i}$ ($1\le i \le h-1$) for $\mathfrak s \mathfrak l_{h}$. Then half the sum of all positive co-roots is the element
$$\rho^{\vee} = \sum_{i=1}^{h} \frac{h+1-2i}{2} \cdot e_{i,i}$$
and one sees that for $c_{-h}=(1-h)/2h$ one obtains exactly 
$$M = \rho^{\vee}$$ 

This yields a more Lie theoretic reformulation of the Schwarz rigidity lemma, an observation due to Kac and Schwarz. To make it precise, let 
\begin{eqnarray}
\label{cyclic-element-definition}
\Lambda_{1,z} = \sum_{i=1}^{h-1}e_{i,i+1} + z \cdot e_{h,1}
\end{eqnarray} 
This is the operator corresponding to multiplication by $\zeta$ under the blending map. Hence under the blending map there is a correspondence   
$$\frac{1}{h\zeta^{h-1}}\cdot \partial_{\zeta} +\frac{1-h}{2h}\frac{1}{\zeta^{h}} +\zeta \;\;\; \mapsto \;\; \;  \partial_{z} +\frac{\rho^{\vee}}{hz} +\Lambda_{1,z}$$ 
between the scalar and matrix formulation of the Kac-Schwarz operator. Therefore the rigidity result of Schwarz, say for $f_{>-h}=z$, can be restated as follows:

\begin{rigiditylemma}[Version II]
\label{Schwarz-rigidity-lemma}
Suppose $V$ is a point of $Gr_{h}(\mathfrak s\mathfrak l_{h}^{(1)})$ and there is a scalar $*$ such that
$$\left( \partial_{z} + \frac{\rho^{\vee}}{h z}  + \frac{*\cdot \textrm{\emph{id}}_{h}}{z}+\Lambda_{1,z}\right ) V \subseteq V$$
Then $*=0$.
\end{rigiditylemma}

\begin{remark}
\label{principal-homogeneous-remark}
The appearance (first described by Kac and Schwarz \cite{KS}) of the operator $ \partial_{z} + \rho^{\vee}/(h z)$
when switching from the scalar to the matrix Lax operator description of the $\mathfrak s \mathfrak l_{h}^{(1)}$ Witten-Kontsevich point is intriguing: 

This operator is the concrete description, in the principal realization of the affine algebra, of one of the members of the Virasoro extension that each affine Lie algebra possesses. This might come as a surprise, since the Sato Grassmannian description of the Drinfeld-Sokolov phase space is based on the standard loop (or homogeneous) realization of the Lie algebra. We elaborate on this special relation between the principal and homogeneous viewpoint for the general Witten-Kontsevich point after the proof of Theorem \ref{main-theorem}.
\end{remark}

To describe the known Lie theoretic meaning of the operator $ \partial_{z} + \rho^{\vee}/(h z)$ we first recall relevant details concerning (possibly twisted) realizations of affine Lie algebras $\mathfrak g^{(k)}$. Consider first the untwisted case $k=1$ and suppose  $\textbf{s}=\textbf{s}_{\textrm{hom}}$ corresponds to the standard loop realization of the affine Lie algebra. In this case the derivation $d$ in Equation (\ref{realization-equation}) can be viewed as the operator $ z\partial_{z}$. This operator fits into an infinite family of derivations $d_{i}=z^{i+1}\partial_{z}$. Analogously, for any choice of $\textbf{s}$  one obtains a collection of derivations $d_{i}^{\textbf{s}}$ of $\mathfrak g^{(k)}(\textbf{s})$ satisfying
$$
\left [d_{i}^{\textbf{s}},d_{j}^{\textbf{s}} \right ]= h_{\textbf{s}} \cdot (j-i) \cdot d_{i+j}^{\textbf{s}}$$
$$\left [d_{i}^{\textbf{s}},g\otimes z^{j} \right ]  =  j \cdot g\otimes z^{ j+ ih_{\textbf{s}}}
$$
where $g$ is in $\mathfrak g$ and $h_{\textbf{s}}$ is as in Equation (\ref{h-equation}). To obtain the usual Witt algebra relations one normalizes
$$\textrm{d}_{i}^{\textbf{s}}:=-\frac{1}{h_{\textbf{s}}} \cdot  d_{i}^{\textbf{s}}$$
For any choice of realization, these derivations can be explicitly described, see for example \cite{WAK}. In particular, for the principal gradation 
$$\textbf{s}_{\textrm{pri}}:=(1,\cdots, 1)$$ 
one has
$$\textrm{d}_{-1}^{\textbf{s}_{\textrm{pri}}} = \partial_{z^{ka_{0}}} + \frac{\rho^{\vee}}{kh} \cdot \frac{1}{z^{ka_{0}}}$$
where $h$ is the Coxeter number of $\mathfrak g$, $\rho^{\vee}$ is half the sum of positive co-roots of $\mathfrak g$ and $\partial_{z^{ka_{0}}}:=(ka_{0} z^{ka_{0}-1})^{-1}\partial_{z}$. In the untwisted case, meaning $k=1$, one has automatically $a_{0}=1$ and $\textrm{d}_{-1}^{\textbf{s}_{\textrm{pri}}}$ is exactly $\partial_{z}+\rho^{\vee}/(hz)$.

Fix a set of Chevalley generators of the finite-dimensional Lie algebra $\mathfrak g$: Denote them by $e_{1},\cdots, e_{r}, f_{1},\cdots, f_{r}$ unless $\mathfrak g^{(k)} = \mathfrak s \mathfrak l_{2n}^{(2)}$ in which case the indexing is chosen to be of the form $e_{0},\cdots,e_{r-1},f_{0},\cdots f_{r-1}$. As described in \cite{KAC}, one can adjoin two elements $e_{0},f_{0}$ in $\mathfrak g^{(k)}$ (or $e_{r},f_{r}$ if $\mathfrak g^{(k)} = \mathfrak s \mathfrak l_{2n}^{(2)}$) to get Chevalley generators of the affine Lie algebra $\mathfrak g^{(k)}$. For example if $k=1$ and $E_{0}$ is a generator of the lowest root space of $\mathfrak g$, then one can take $e_{0}=E_{0}\otimes z$. Let 
\begin{eqnarray}
\label{general-cyclic-element}
\Lambda_{1,z}=\sum_{i=0}^{r} e_{i}
\end{eqnarray}
and note that for $\mathfrak g^{(k)}= \mathfrak s \mathfrak l_{h}^{(1)}$ this formula for $\Lambda_{1,z}$ is consistent with the earlier definition in Equation (\ref{cyclic-element-definition}). Version II of the rigidity result of Schwarz can now be reformulated once more as:
\begin{rigiditylemma}[Version III]
\label{Schwarz-rigidity-lemma-2}
Suppose $V$ is a point of $Gr_{h}(\mathfrak s\mathfrak l_{h}^{(1)})$ and there is a scalar $*$ such that
$$\left(\textrm{\emph{d}}_{-1}^{\textrm{\emph{\textbf{s}}}_{\textrm{\emph{pri}}}}  + \frac{*\cdot \textrm{\emph{id}}_{h}}{z}+\Lambda_{1,z}\right ) V \subseteq V$$
Then $*=0$
\end{rigiditylemma}

This version of the rigidity result can be directly adapted to any affine Lie algebra. We show in Theorem \ref{main-theorem} that it indeed remains true in this generality. We formulate it in somewhat greater generality than the above situation, replacing $\Lambda_{1,z}$ by an arbitrary element in $\mathfrak g^{(k)}$, since the proof is the same. 

Before giving the proof of Theorem \ref{main-theorem} we discuss a point  mentioned in the introduction. It might seem that the above stated rigidity is very narrow, and that maybe there is a rigidity with respect to varying $\rho^{\vee}$ in the Cartan algebra. However this does not hold, as we now explain. We restrict to untwisted algebras $\mathfrak g^{(1)}$ for simplicity, where $\mathfrak g$ is a simple complex Lie algebra with a fixed faithful $n$-dimensional representation as before. 

Cafasso and Wu show in \cite{CW2} (Theorem 3.10) that there is $\gamma = \exp(\sum_{i<0} U_{i})$
(where $U_{i}$ is of principal degree $i$ in $\mathfrak g^{(1)}$) such that
$$\gamma^{-1} \left (\partial_{z} +\frac{\rho^{\vee}}{hz}-\Lambda_{1,z} \right ) \gamma = \partial_{z} -\Lambda_{1,z}$$
Since $\partial_{z}-\Lambda_{1,z}$ maps $\mathcal H_{n}^{+}:=\CC[z]^{n}$ to itself, it follows that $V:= \gamma \mathcal H_{n}^{+}$ satisfies
$$\left (\partial_{z} +\frac{\rho^{\vee}}{hz}-\Lambda_{1,z} \right ) V\subseteq V$$
We claim that a very small modification of the arguments of Cafasso and Wu shows that for any $H$ in the Cartan subalgebra of $\mathfrak g$ there is a point $V$ of the big-cell of the Grassmannian such that, for example, one has
\begin{eqnarray}
\label{Cartan-stabilization-equation}
\left (\partial_{z} +\frac{H}{z}+\Lambda_{1,z} \right ) V \subseteq V
\end{eqnarray}
The proof is the following: First, changing the sign in front of $\Lambda_{1,z}$ does not affect the argument at all, we choose it here to conform with our earlier conventions. The key observation is then that the only point in which $\rho^{\vee}$ enters the proof of \cite{CW2} (Theorem 3.10) is loc. cit. Equation 3.37. This says that there is an element $Y_{-(h+1)}$ of principal degree $-(h+1)$ in the Lie algebra $\mathfrak g^{(1)}$ with
$$[Y_{-(h+1)},\Lambda_{1,z}]=\frac{\rho^{\vee}}{hz}$$
The reason this equation has a solution is that the-right hand side is of principal degree $-h$ and $-h$ is not an exponent of $\mathfrak g^{(1)}$. Hence the right-hand side is in the image of $\textrm{ad } \Lambda_{1,z}$ by \cite{CW2} (Equation 3.33). But one can replace $\rho^{\vee}$ in this argument by any element in the Cartan subalgebra since these elements are all of principal degree $0$. It follows that there is $\mu =  \exp(\sum_{i<0} V_{i})$ (where $V_{i}$ is of principal degree $i$ in $\mathfrak g^{(1)}$) with
$$\mu^{-1} \left (\partial_{z} +\frac{H}{z}+\Lambda_{1,z} \right ) \mu = \partial_{z} +\Lambda_{1,z}$$
Hence one can take $V=\mu \mathcal H_{n}^{+}$ in Equation (\ref{Cartan-stabilization-equation}).

We now come back to the main result.

\begin{theorem}
\label{main-theorem}
Let $\mathfrak g$ be a simple complex Lie algebra with a fixed faithful $n$-dimensional representation, as before. Let $\mathfrak g^{(k)}$ be a corresponding affine Lie algebra with first Kac label $a_{0}$ and let $V$ be a point in $Gr_{n}(\mathfrak g^{(k)})$ such that 
\begin{eqnarray}
\label{stabilization-equation}
\left ( \textrm{\emph{d}}_{-1}^{\textrm{\emph{\textbf{s}}}_{\textrm{\emph{pri}}}} + * \cdot \frac{\textrm{\emph{id}}_{n}}{z^{ka_{0}}} + g \right ) V \subseteq V
\end{eqnarray}
for some $g$ in $\mathfrak g^{(k)}$ and some scalar $*$. Then $*=0$. 
\end{theorem}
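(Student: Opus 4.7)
The plan is to exploit Remark \ref{trace-remark} -- every element of $\mathfrak g$ is a traceless matrix in the fixed representation -- so that the scalar summand $*\cdot\textrm{id}_n/z^{ka_0}$ becomes the unique part of the stabilizing operator whose matrix-valued piece has nonzero trace. A single trace calculation should then force $*=0$.

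By Equation (\ref{homogeneous-Grassmannian-equation}), write $V = \Phi\cdot\mathcal H^{+}_{n}$ with $\Phi = \exp(A)$ and $A \in \mathfrak g^{(k)}(\textbf{s}_{\textrm{hom}})_{-}$. Since $A$ is $\mathfrak g$-valued entry-wise (hence traceless), $\det\Phi = \exp(\tr A) = 1$, so $\Phi \in \SL_n(\CC(\!(1/z)\!))$. Conjugating the stabilization condition (\ref{stabilization-equation}) by $\Phi$ produces $\tilde T = \partial_{z^{ka_0}} + \tilde M$, where
$$\tilde M \;=\; \Phi^{-1}(\partial_{z^{ka_0}}\Phi) \;+\; \frac{1}{kh\,z^{ka_0}}\,\Phi^{-1}\rho^{\vee}\Phi \;+\; \Phi^{-1} g\,\Phi \;+\; \frac{*\cdot \textrm{id}_n}{z^{ka_0}},$$
and $\tilde T$ stabilizes $\mathcal H^{+}_{n}=\CC[z]^n$. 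Evaluating $\tilde T$ on each constant vector $e_j \in \mathcal H^{+}_{n}$ kills the differential part and yields $\tilde M e_j = \tilde T(e_j) \in \CC[z]^n$, so every column of $\tilde M$ is a polynomial vector; in particular $\tr(\tilde M)$ lies in $\CC[z]$. On the other hand, Jacobi's formula combined with $\det\Phi = 1$ gives $\tr(\Phi^{-1}\partial_{z^{ka_0}}\Phi) = \partial_{z^{ka_0}}\log\det\Phi = 0$, and cyclicity of the trace together with the tracelessness of $\rho^{\vee}$ and of the matrix representing $g$ yields $\tr(\Phi^{-1}\rho^{\vee}\Phi) = \tr(\Phi^{-1} g\,\Phi) = 0$. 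Hence $\tr(\tilde M) = n\cdot */z^{ka_0}$. Since a polynomial in $z$ can equal $n\cdot */z^{ka_0}$ only when $*=0$, one concludes $*=0$.

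The only genuinely delicate step is the reduction to $\tilde M$ being polynomial, which rests on the elementary observation that $\partial_{z^{ka_0}}$ annihilates constant vectors, so $\tilde T(e_j)$ isolates precisely the $j$-th column of $\tilde M$. Everything else is a uniform trace cancellation that applies identically in the untwisted and twisted cases and across all simple $\mathfrak g$, which is precisely what should give the argument its generality over the earlier case-by-case treatments of \cite{SCH}, \cite{AVM}, and \cite{FAS}.
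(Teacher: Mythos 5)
Your proposal is correct and follows essentially the same route as the paper: conjugate by the dressing operator $\gamma=\exp(A)$, observe that the resulting zeroth-order (matrix) part must have no negative powers of $z$ because it preserves $\mathcal H^{+}_{n}$, and then invoke the tracelessness of $\mathfrak g$ in the chosen representation (Remark \ref{trace-remark}) to isolate and kill the scalar $*$. The only cosmetic difference is that you take a global trace, disposing of the gauge term via Jacobi's formula and $\det\Phi=1$, whereas the paper extracts the graded component of degree $-ka_{0}$ and notes that the gauge term only contributes in degrees $\le -1-ka_{0}$; both hinge on the same facts and your variant is equally valid.
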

\begin{proof}
Suppose $V$ is a point of $Gr_{n}(\mathfrak g^{(k)})$ that satisfies the stabilization condition in Equation (\ref{stabilization-equation}). Write 
$$V = \gamma \cdot \mathcal H^{+}_{n} \;\;\; \textrm{ with } \;\;\; \gamma = \exp(A)  \;\;\; \textrm{ for } \;\;\; A \in \mathfrak g^{(k)}(\textbf{s}_{\textrm{hom}})_{-}$$
Note that $A$ is by definition a particular element of $z^{-1} \mathfrak g[\![z^{-1}]\!]$ and hence one can write
$$\gamma = \textrm{id}_{n} + d_{-1}z^{-1} + d_{-2} z^{-2} + \cdots$$
for some $d_{i}$'s in $\mathfrak g$. The operator
$$
\mathcal S := \gamma^{-1} \left ( \textrm{d}_{-1}^{\textbf{s}_{\textrm{pri}}} + * \cdot \frac{\textrm{id}_{n}}{z^{ka_{0}}} + g  \right ) \gamma$$
is given by
\begin{eqnarray}
\label{new-stabilization-equation}
\partial_{z^{ka_{0}}}  + (ka_{0} z^{ka_{0}-1})^{-1} \gamma^{-1} \cdot \partial_{z}(\gamma) + \gamma^{-1} \left (\frac{\rho^{\vee}}{khz^{ka_{0}}} +  g \right  )\gamma   + * \cdot  \frac{\textrm{id}_{n}}{z^{ka_{0}}} 
\end{eqnarray}
The stabilization condition Equation (\ref{stabilization-equation}) yields 
that this operator maps $\mathcal H_{n}^{+}$ into $\mathcal H_{n}^{+}$. Note that the gauge term is of the form 
\begin{eqnarray}
\label{gauge-term-equation}
(ka_{0} z^{ka_{0}-1})^{-1} \gamma^{-1} \cdot \partial_{z}(\gamma)=\sum_{i \le -1-ka_{0}} e_{i}z^{i}
\end{eqnarray}
for suitable $e_{i}$ in $\mathfrak g$. Let $\textrm{deg}$ denote the $z$-degree gradation on $\mathfrak g \mathfrak l_{n} \; \CC(\!(1/z)\!)$. Via our choice of $n$-dimensional representation of $\mathfrak g$ one has 
an inclusion
$$\bigoplus_{i \in \ZZ} \left ( \mathfrak g^{[i]} \otimes z^{i} \right ) 
\hookrightarrow \mathfrak g \mathfrak l_{n}\; \CC(\!(1/z)\!)$$
of the ``loop part'' of the type $\textbf{s}_{\textrm{hom}}$ realization 
of $\mathfrak g^{(k)}$ given in Equation (\ref{realization-equation}) and the homogeneous gradation is given by the restriction of the degree gradation. In the following equations the subscripts indicate degrees with respect to the indicated gradations of $z$-degree or homogeneous gradation $\textbf{s}_{\textrm{hom}}$. The fact that $\mathcal S \mathcal H_{n}^{+} \subseteq \mathcal H_{n}^{+}$ together with Equation (\ref{new-stabilization-equation}) now yields
$$\left [  (ka_{0} z^{ka_{0}-1})^{-1} \gamma^{-1} \cdot \partial_{z}(\gamma) + \gamma^{-1} \left (\frac{\rho^{\vee}}{khz^{ka_{0}}} + g \right  )\gamma + * \cdot \frac{\textrm{id}_{n}}{z^{ka_{0}}} \right ]_{<0, \textrm{deg}} = 0$$
and in particular
$$\left [ (ka_{0} z^{ka_{0}-1})^{-1} \gamma^{-1} \cdot \partial_{z}(\gamma) + \gamma^{-1} \left (\frac{\rho^{\vee}}{khz^{ka_{0}}} + g \right  )\gamma  + * \cdot \frac{\textrm{id}_{n}}{z^{ka_{0}}}  \right ]_{-ka_{0}, \textrm{deg}} = 0$$
Using Equation (\ref{gauge-term-equation}) it follows that
\begin{eqnarray*}
* \cdot \frac{\textrm{id}_{n}}{z^{ka_{0}}}& =& - \left [(ka_{0} z^{ka_{0}-1})^{-1} \gamma^{-1} \cdot \partial_{z}(\gamma) + \gamma^{-1} \left (\frac{\rho^{\vee}}{khz^{ka_{0}}} + g \right  )\gamma  \right ]_{-ka_{0}, \textrm{deg}} \\
& =& - \left [ \gamma^{-1} \left (\frac{\rho^{\vee}}{khz^{ka_{0}}} + g \right  )\gamma     \right ]_{-ka_{0}, \textrm{deg}} \\
& =& - \left [ \gamma^{-1} \left (\frac{\rho^{\vee}}{khz^{ka_{0}}} + g \right  )\gamma  \right ]_{-ka_{0}, \textbf{s}_{\textrm{hom}}}
\end{eqnarray*}
where the last equation comes from the previously mentioned agreement between the homogeneous gradation and the $z$-degree gradation for elements of $\mathfrak g^{(k)}$. It follows that 
$$* \cdot \frac{\textrm{id}_{n}}{z^{ka_{0}}} \in \mathfrak g^{(k)}(\textbf{s}_{\textrm{hom}})$$ 
Since $\mathfrak g$ is realized as a set of trace zero matrices, see Remark \ref{trace-remark}, it follows that $*=0$ as desired.
\end{proof}
There certainly are points in the Grassmannian satisfying the conditions of the theorem: In complete analogy with the previously discussed KdV case, the Witten-Kontsevich point of the $\mathfrak g^{(k)}$ Drinfeld-Sokolov hierarchy is known to correspond to the point $V$ in $Gr_{n}(\mathfrak g^{(k)})$ such that
\begin{eqnarray}
\label{general-Witten-Kontsevich-equation}
zV\subseteq V \;\; , \;\;
\left ( \textrm{d}_{-1,z}^{\textbf{s}_{\textrm{pri}}} -\Lambda_{1,z}\right ) V \subseteq V
\end{eqnarray}

We refer to \cite{CW2} (Section 3.3) for details (the minus sign in front of $\Lambda_{1,z}$ is simply here in order to follow the normalization choices of loc. cit.). As indicated in Remark \ref{principal-homogeneous-remark}, this Grassmannian description of the Witten-Kontsevich point in terms of the $\textrm{d}_{-1}^{\textbf{s}_{\textrm{pri}}}$ action is interesting, since the way the Grassmannian describes the full Drinfeld-Sokolov phase space is in terms of the homogeneous realization as can be seen from Equation (\ref{homogeneous-Grassmannian-equation}). One can ask what role the homogeneous realization $\textrm{d}_{-1}^{\textbf{s}_{\textrm{hom}}}$ of the very same derivation plays for the Witten-Kontsevich point. Our result in \cite{LUU} can be viewed as an attempt to do so. It turns out, the homogeneous realization does indeed play a role: It corresponds to the non-affine oper of the Witten-Kontsevich point, as we now explain. 

We restrict to the untwisted case $k=1$. Let $L$ be the Drinfeld-Sokolov Lax operator corresponding to $V$ satisfying Equation (\ref{general-Witten-Kontsevich-equation}). We have shown in \cite{LUU} that
$$L=\partial_{x}+\Lambda_{1,z} - x\cdot E_{0}$$
where $E_{0}$ is a suitable generator of the lowest root space of $\mathfrak g$. The corresponding non-affine oper is obtained by setting $z=0$ and is hence given by
$$ \partial_{x} + \sum_{i=1}^{r} e_{i} - x \cdot E_{0} = \textrm{d}_{-1,x}^{\textbf{s}_{\textrm{hom}}} - \tilde \Lambda_{1,x}$$
where $\tilde \Lambda_{1,x}$ is defined like $\Lambda_{1,z}$ in Equation (\ref{general-cyclic-element}) but with $x$ replacing $z$ and the Chevalley generators $e_{i}$ of $\mathfrak g$ switched to $-e_{i}$ and  furthermore $\textrm{d}_{-1,x}^{\textbf{s}_{\textrm{hom}}}$ is defined in the same manner as  $\textrm{d}_{-1}^{\textbf{s}_{\textrm{hom}}}$ but with $x$ playing the role of $z$. This is one way in which the homogeneous derivation plays a crucial role for the Witten-Kontsevich point: The switch from principal to homogeneous description of the derivation $\textrm{d}_{-1}^{\textbf{s}}$ hence essentially corresponds to switching from the Grassmannian symmetry condition to the (non-affine) Lax operator itself, with the important difference of switching the loop variable of the affine algebra $\mathfrak g^{(1)}$ from $z$ to $x$:
$$\textrm{d}_{-1}^{\textbf{s}_{\textrm{pri}}} -\Lambda_{1,z} \; \; \leadsto \; \; \textrm{d}_{-1,x}^{\textbf{s}_{\textrm{hom}}} - \tilde \Lambda_{1,x}$$
Note that the variables $z$ and $x$ are known to often play Fourier dual roles.

Another viewpoint on the homogenous aspect comes from Lemma 3.8 in \cite{CW2}: If $\gamma$ denotes the dressing operator of the Lax operator $L$ it is shown there that
$$\gamma \left ( \textrm{d}_{-1}^{\textbf{s}_{\textrm{pri}}} - \Lambda_{1,z} \right ) \gamma^{-1} = \textrm{d}_{-1}^{\textbf{s}_{\textrm{hom}}} - \Lambda_{1,z} $$
So the dressing operator of the Witten-Kontsevich point gives rise to an isomorphism of connections 
$$\textrm{d}_{-1}^{\textbf{s}_{\textrm{pri}}} - \Lambda_{1,z} \cong \textrm{d}_{-1}^{\textbf{s}_{\textrm{hom}}} - \Lambda_{1,z}$$
on the formal punctured disc $\textrm{Spec } \CC(\!(1/z)\!)$.

\hspace{0.2in}

\textbf{Acknowledgements:}
It is a great pleasure to thank Mattia Cafasso and Albert Schwarz for very helpful exchanges.

\Addresses

\end{document}